\definecolor{subtler}{rgb}{1,0,0.1}  
\newcommand{\be}{\begin{equation}}
\newcommand{\ee}{\end{equation}}
\newcommand{\ben}{\begin{equation*}}
\newcommand{\een}{\end{equation*}}
\newcommand{\ba}{\begin{eqnarray}}
\newcommand{\ea}{\end{eqnarray}}
\newtheorem{thm}{Theorem}
\newtheorem{lem}[thm]{Lemma}
\newtheorem{defn}[thm]{Definition}
\newtheorem{rem}[thm]{Remark}
\newcommand{\id}[1]{\mathbf{I}_{#1}}
\newcommand{\mnull}{\mathbf{0}}
\newcommand{\Real}{\mathbb R}
\newcommand{\T}{\mathrm{T}}
\newcommand{\ones}[1]{\mathbf{1}_{#1}}
\newcommand{\tones}[1]{\mathbf{\tilde 1}_{#1}}
\def\rank{\mathrm{rank}}
\def\bA{\mathbf{A}}
\def\tA{\mathbf{\tilde A}}
\def\bB{\mathbf{B}}
\def\tB{\mathbf{\tilde B}}
\def\bC{\mathbf{C}}
\def\tC{\mathbf{\tilde C}}
\def\bD{\mathbf{D}}
\def\bE{\mathbf{E}}
\def\bH{\mathbf{H}}
\def\bW{\mathbf{W}}
\def\bX{\mathbf{X}}
\def\bx{\mathbf{x}}
\def\by{\mathbf{y}}
\def\ba{\mathbf{a}}
\def\bb{\mathbf{b}}
\def\bc{\mathbf{c}}
\def\bpi{\boldsymbol{\pi}}
\def\tpi{\boldsymbol{\tilde \pi}}
\def\blambda{\boldsymbol{\lambda}}
\def\tlambda{\boldsymbol{\tilde \lambda}}
\def\blambdam{\blambda_{\mathrm{multi}}}
\def\ta{\mathbf{\tilde a}}
\def\tb{\mathbf{\tilde b}}
\def\supp{\mathrm{supp}}
\def\orow{\otimes^{\mathrm{row}}}
\newcommand{\bigorow}[2]{{\bigotimes_{#1}^{#2}{}^{\mathrm{row}}}\,}
\def\krank{\mathrm{krank}}
\def\bcX{\boldsymbol{\mathcal{X}}}
\def\cX{\mathcal{X}}
\def\cY{\mathcal{Y}}
\def\bcL{\boldsymbol{\mathcal{L}}}
\def\bcM{\boldsymbol{\mathcal{M}}}
\def\Perm{\boldsymbol{\Pi}}
\def\Scale{\boldsymbol{\Lambda}}
\def\techrep{\cite{Tune12HMMTensor}}
\begin{document}

\title{Hidden Markov Model Identifiability via Tensors}

\author{Paul~Tune,~Hung~X.~Nguyen~and Matthew~Roughan\\ %
\authorblockA{School of Mathematical Sciences, University of Adelaide,\\ Adelaide, SA, Australia.\\
Email: \{paul.tune, hung.nguyen, matthew.roughan\}@adelaide.edu.au}}

\maketitle

\begin{abstract}
The prevalence of hidden Markov models (HMMs) in various applications of statistical signal processing and communications is 
a testament to the power and flexibility of the model. In this paper, we link the identifiability problem with tensor decomposition, in
particular, the Canonical Polyadic decomposition. Using recent results in deriving uniqueness conditions for tensor
decomposition, we are able to provide a necessary and sufficient condition for the identification of the 
parameters of discrete time finite alphabet HMMs. This result resolves a long standing open problem 
regarding the derivation of a necessary and sufficient condition for uniquely identifying an HMM. We then further extend recent preliminary work on the identification of HMMs 
with multiple observers by deriving necessary and sufficient conditions for identifiability in this setting. 
\end{abstract}

\section{Introduction}

The hidden Markov model (HMM) was first introduced in the late 1950s by Blackwell and Koopmans 
\cite{Blackwell57HMM} and generalised later by Baum and Petrie \cite{Baum66HMM}. HMMs have been applied to a variety of domains, such as signal processing, machine learning, communications and many more,
with particular emphasis on the inference of the parameters of the HMM, in particular, the hidden states of the system. 
Typically, an unbiased, or asymptotically unbiased, estimator such as a maximum likelihood estimator, is used to infer 
these states, using algorithms such as the famed Baum-Welch algorithm \cite{Baum70ForwBack}. However, identifiability 
conditions, required to ensure the existence of an unbiased estimator, are generally not well-known, with only a select 
number of works  proposing these conditions \cite{Allman09Latent,Finesso90HMM,Gilbert59HMM}. These conditions are
probabilistic and difficult to verify in practice.

In this paper, we derive an identifiability condition for a stationary discrete time HMM, where the 
observations are the realisations of a probabilistic function. We show a strong connection between the 
identifiability of HMMs and the uniqueness of the Canonical Polyadic (CP) tensor decomposition (see \cite{Kolda09Tensor}).
Specifically through a tensor model called the restricted CP model, we derive a necessary and
sufficient condition by using a result by Kruskal \cite{Kruskal77Decomp}, called the permutation lemma. Our main result 
resolves an open problem regarding the derivation of a necessary and sufficient condition for uniquely identifying an HMM. 
A highlight of our results is that the condition is deterministic, compared to generic (probabilisitic) identifiability results. They 
are also easier to verify compared to previous conditions.

These results are particularly helpful in studying the recently proposed multi-observer HMMs 
\cite{Li00TrainingMulti,Nguyen12MultiHMM}. We 
consider two settings: the \emph{homogeneous} setting where all observers possess the same observation matrix, and the 
\emph{heterogeneous} setting, where at least two observers have distinct observation matrices. Surprisingly, the condition
for identifiability in the homogeneous setting is equivalent to having just a single observer of the HMM. Thus, if the HMM
cannot be identified with a single observer, no additional number of independent homogeneous observers can hope to 
identify the hidden states. The heterogeneous setting is shown to provide a significant advantage over the homogeneous 
setting, due to sufficient variability of the observations, contributed by different viewpoints of the independent observers.

The rest of this section introduces the notation used throughout the paper. Section \ref{sec:hmm} formulates our problem and defines the HMM. Section \ref{sec:tensor} provides an overview of
the CP decomposition and some important results in tensor decomposition that we will invoke when proving our results. Section 
\ref{sec:single} derives the identifiability condition of HMMs with only one observer present. 
Section \ref{sec:multi} further extends our framework to the multi-observer setting. Finally, we conclude and outline some 
future work in Section \ref{sec:conclusion}. We defer proofs and more details to our technical report \techrep.

Some notation and definitions are in order. All vectors and matrices are represented with lower and upper case boldface 
fonts respectively. Sets are represented with calligraphic font. Random variables are represented with italic fonts while 
their realisation is represented by lower case italic fonts. Tensors are represented by upper case, calligraphic boldface 
fonts. Let $\id{n}$ be the identity matrix of size $n \times n$, while $\ones{n}$ denotes a column vector of $n$ ones.
$\supp(\bx)$ denotes the support of vector $\bx$.

The Kronecker, or tensor, product between two matrices $\bA \in \Real^{m \times n}$ and $\bB \in \Real^{p \times q}$ is 
denoted by $\bA \otimes \bB \in \Real^{mp \times nq}$. We also define a row-wise tensor product, which we borrow from 
\cite{Allman09Latent}, where with matrices $\bA \in \Real^{m\times n_1}$ and $\bB \in \Real^{m\times n_2}$ with rows 
$\ba_1,\ba_2, \cdots, \ba_m$ and $\bb_1,\bb_2, \cdots, \bb_m$ respectively, the row-wise tensor product is equivalent to
\ben
\bA \orow \bB = 
\begin{bmatrix}
\ba_1 \otimes \bb_1\\
\ba_2 \otimes \bb_2\\
\vdots\\
\ba_m \otimes \bb_m
\end{bmatrix}.
\een
This definition is related to the Khatri-Rao product, which is the column-wise tensor product. Note that for row vectors 
$\ba$ and $\bb$, $\ba \orow \bb = \ba \otimes \bb$. All other notation will be defined on an as needed basis.

\section{Hidden Markov Models}
\label{sec:hmm}

Throughout the paper, we only consider the discrete time finite alphabet HMM. Time is organised into regularly spaced discrete 
intervals. Let $\{X_t\}_{t \ge 1}$ be the non-observable 
states of an irreducible, aperiodic Markov chain and $\{Y_t\}_{t \ge 1}$ be the observable states, both at 
time $t$, called an \emph{observation process}. Thus, $\{X_t\}_{t \ge 1}$ constitutes the hidden Markov chain, as it cannot be directly measured. Without loss of
generality, let the alphabets of $X_t$ and $Y_t$ be the sets $\cX = \{1,2,\cdots,q\}$ and $\cY = \{1,2,\cdots,\kappa\}$
respectively. We further assume $q \ge 2$ and $\kappa \ge 2$.

We assume $\{X_t\}_{t \ge 1}$ is stationary
for simplicity of exposition, although our results apply to non-stationary Markov chains as well, with appropriate
modifications. The observations $\{Y_t\}_{t \ge 1}$ are assumed to be i.i.d.~with $Y_t$ only dependent on $X_{t}$. The 
HMM is described by the joint process $(X_t,Y_t) \in \cX \times \cY$ for all $t = 1,2,\cdots,N$, in terms of the
state space model
\begin{align*}
X_{t+1} = f(X_{t}),\ 
Y_t = g(X_{t}),
\end{align*}
with the initial state described by an initial random variable $X_1$, from an initial distribution $\Pr(X_1 = i) = \pi_i$, 
denoted by the $q$--length row vector $\bpi$. The function $f(\cdot)$ is probabilistic, and obeys the $q \times q$ \emph
{transition matrix} $\bA$, where its $(i,j)$-th element is $a_{i,j} := \Pr(X_{t+1} = j\,|\,X_{t} = i)$. The function $g(\cdot)$ may be 
deterministic, but here, we consider it a probabilistic function, with the transition of observation states described by the $q
\times \kappa$ \emph{observation matrix} $\bB$, where the $(i,j)$-th element is $b_{i,j} := \Pr(Y_t = j\,|\,X_{t} = i)$. The 
function $g(\cdot)$ is assumed to be surjective. Since the Markov chain is assumed to be stationary, the observation 
process $\{Y_t\}_{t \ge 1}$ is stationary as well. Also, the observation process $\{Y_t\}_{t \ge 1}$ may not be a Markov chain 
in general, even though the input process is. We are now in a position to formalise HMMs.

\smallskip
\begin{defn}
A discrete time finite alphabet HMM is parameterised by the set $\blambda = \{\bpi; q, \kappa, \bA,\bB\}$:
\begin{itemize}
\item $\bpi$: initial state probabilities, which may or may not be sampled from a stationary distribution,
\item $q$: number of hidden states of $X_t$, i.e.~$|\cX| = q$,
\item $\kappa$: number of observation states of $Y_t$, i.e.~$|\cY|~=~\kappa$,
\item $\bA$: $q \times q$ transition matrix of hidden states $X_t$, and
\item $\bB$: $q \times \kappa$ observation matrix of observation process $Y_t$.
\end{itemize}
\label{defn:hmm}
\end{defn}
\smallskip

One way of measuring the complexity of the HMM is by the number of states required to describe the Markov chain. 
The \emph{order} of an HMM is minimum of $|\cX|$ amongst all representations \cite{Finesso90HMM}.
An HMM is \emph{minimal} if it has a representation such that $|\cX|$ is equal to its order.

An \emph{observation letter} $y_t$ is defined as a single realisation of the observation process at time $t$. 
A \emph{sequence} from time $t_1$ to $t_2$ is defined as a series of consecutive observation letters from time $t_1$ to 
$t_2$. A sequence has \emph{length} $N$ if it consists of $N$ observation letters. 


The joint probability of a particular observed sequence $y_1,y_2,\cdots,y_N$ may be described by
\begin{align}
\nonumber
&P_{\blambda} (Y_1=y_1, Y_2= y_2,\cdots, Y_N = y_N) \\
\nonumber
&=\sum_{\bx \in \cX^N} \pi_{x_1} a_{x_1,x_2} b_{x_1,y_1} a_{x_2,x_3} b_{x_2,y_2}\cdots a_{x_{N-1},x_{N}} b_{x_N,y_N}\\
\label{eq:sequence_prob}
& =\bpi  \bW \bE(y_1)  \bW \bE(y_2) \cdots \bW  \bE(y_N) \ones{q},
\end{align}
where $\bE(k)$ is a $\kappa q \times q$ matrix with the $q \times q$ identity matrix in the $k$-th row partition, and 
\ben
\bW = \bB \orow \bA =
\begin{bmatrix}
\bD_1(\bB)\bA\ \bD_2(\bB) \bA\  \cdots\ \bD_\kappa(\bB)\bA 
\end{bmatrix},
\een
where $\bD_k(\bB)$ denotes the diagonal matrix with the $k$-th column of $\bB$ lying on its diagonal. 


\subsection{Equivalence and identifiability of HMMs}

The observation process $\{Y_t\}_{t\ge 1}$ is assumed to admit a representation of a 
Markov chain with $q$ states. It is possible to construct an HMM with more states that generates the same observation process. Let the process be alternatively parameterised by the set $\tlambda = \{\tpi; \tilde q, \tilde \kappa, 
\tA,\tB\}$. Equivalence of HMMs is defined as follows:

\begin{defn}
Two HMMs with parameterisations $\blambda$ and $\tlambda$ respectively are \emph{equivalent} if and only if for all sequences 
$y_1,y_2,\cdots,y_N$, 
\begin{align*}
&P_{\blambda} (Y_1=y_1, Y_2= y_2,\cdots, Y_N = y_N) \\
&\hspace{1cm}= P_{\tlambda} (\tilde Y_1=y_1, \tilde Y_2= y_2,\cdots, \tilde Y_N = y_N),
\end{align*}
for any integer $N \ge 1$.
\label{def:equivalence}
\end{defn}
  
There are two types of identifiability: \emph{deterministic} and \emph{generic identifiability}. Deterministic 
identifiability implies that an HMM satisfying the condition can always be identified. Generic identifiability means that the 
HMM is identified with probability 1, i.e.~identifiability holds everywhere except for some model parameters that lie in a set of 
Lebesgue measure zero. Allman et al.~\cite{Allman09Latent} defines it as all nonidentifiable parameters of the model lying in a 
proper subvariety. 

Our main result is a condition when an HMM can or cannot be deterministically identified. 

\section{A Summary on Tensors}
\label{sec:tensor}

As shown in \eqref{eq:sequence_prob}, the joint probability of a sequence can be expressed 
as matrix multiplication of row tensor products. Identifiability then simply boils down to decomposing the product into
factors via tensor decomposition. 

The tensor is essentially a multidimensional array of numbers, with a general overview found in \cite{Kolda09Tensor}. The 
tensor order is the number of indices required to unambiguously label a component of the tensor, called a \emph{way}.

One particular important decomposition of a tensor is the Canonical Polyadic (CP) decomposition. Let the components of a 
tensor be $\bA \in \Real^{q\times m}$, $\bB \in \Real^{q\times n}$ and $\bC \in \Real^{q\times p}$, written succinctly as 
$\lbrack \bA, \bB, \bC\rbrack$. Then a tensor $\bcX$ constructed from these components is expressed as
\be
\lbrack \bcX; \bA,\bB,\bC\rbrack = \sum_{i=1}^q \ba_i \otimes \bb_i \otimes \bc_i,
\label{eq:cp_tensor}
\ee
where $\ba_i, \bb_i, \bc_i$, $i=1,2\cdots,q$ are the rows of $\bA$, $\bB$ and $\bC$ respectively, essentially a decomposition to $q$ single rank tensors. A tensor is \emph{irreducible} 
with $q$ components if and only if it cannot be decomposed to fewer than $q$ components. A tensor $\bcX$ is \emph{permutation 
and scaling indeterminate} if its components are unique up to a scaling and permutation of rows. Hence, for any 
alternative decomposition of $\bcX$ with components $\lbrack \tA, \tB, \tC \rbrack$, there exists a permutation matrix $\Perm$ and 
nonsingular scaling matrices $\Scale_\bA$, $\Scale_\bB$ and $\Scale_\bC$ where $\Scale_\bA \Scale_\bB \Scale_\bC = \id{q}$, 
such that $\tA = \Perm \Scale_\bA \bA$, $\tB = \Perm \Scale_\bB \bB$ and $\tC = \Perm \Scale_\bC \bC$. Finally, an equivalent 
representation of a tensor is given by its mode \emph{matricisation}. For example, the first mode matricisation of $\bcX$ is $\bA^\T 
 (\bB \orow \bC)$, while its second mode matricisation is $\bB^\T (\bC \orow \bA)$.

Surprisingly, under certain mild conditions, a tensor of order 3 and above can yield a unique CP decomposition, 
up to a scaling and permutation of rows of the components, unlike matrices. A general sufficient condition was proposed by 
Kruskal \cite{Kruskal77Decomp}, with its generalisation in \cite{Sidiropoulos99Tensor}.

Central to Kruskal's and our results is the concept of the \emph{Kruskal rank} defined below:
\begin{defn}
The Kruskal rank of a matrix $\bX$, $\krank(\bX)$ is defined as the largest integer $K$ such that any subset of  
$K$ rows is linearly independent.
\label{defn:kruskal}
\end{defn}
Unlike the rank of a matrix, the Kruskal rank changes when one defines it for columns instead. Here, we stick to the above 
definition for rows, since this is directly relevant to our discussions. 




The cornerstone of Kruskal's result, as pointed out by \cite{Jiang04PermLemma,Stegeman05Perm} is Kruskal's permutation 
lemma, here modified for rows. The lemma is key to our proposed identifiability condition.
\begin{lem}[Permutation lemma]
Given two matrices $\bH$ and $\bar\bH$, both with size $q \times r$, suppose that $\bH$ has no 
identically zero rows, and assume the following implication holds for all column vectors $\bx$:
\begin{align*}
&|\supp(\bar\bH\bx)| \le q-\rank(\bar\bH)+1 \\
&\hspace{1cm} \text{ implies that } |\supp(\bH\bx)| \le |\supp(\bar\bH\bx)|.
\end{align*}
Then, $\bar\bH = \Perm \Scale \bH$, where $\Perm$ is a permutation matrix and $\Scale$ is a nonsingular diagonal scaling 
matrix.
\label{lem:kruskal_perm}
\end{lem}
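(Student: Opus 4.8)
The plan is to produce the permutation matrix $\Perm$ and nonsingular diagonal $\Scale$ by establishing a row-level correspondence: I must show that every row of $\bar\bH$ is a nonzero scalar multiple of a row of $\bH$, and that this correspondence is a bijection on $\{1,\dots,q\}$. First I would recast the hypothesis in terms of zero sets. Writing $\mathbf{h}_i$ and $\bar{\mathbf{h}}_i$ for the respective rows and setting $N(\bx)=\{i:\mathbf{h}_i\bx=0\}$ and $\bar N(\bx)=\{i:\bar{\mathbf{h}}_i\bx=0\}$, the implication reads: whenever $|\bar N(\bx)|\ge\rank(\bar\bH)-1$, we are forced to have $|N(\bx)|\ge|\bar N(\bx)|$. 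The entire argument then consists of feeding carefully chosen $\bx$ into this implication.

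The key construction is the following. Fix a (necessarily nonzero) target row $\bar{\mathbf{h}}_m$ and put $\bar r=\rank(\bar\bH)$. Since the rows of $\bar\bH$ span its row space, I can extend $\bar{\mathbf{h}}_m$ to a basis of that space by $\bar r-1$ further rows, whose linear span then excludes $\bar{\mathbf{h}}_m$. Let $O$ be the orthogonal complement of that span; it has dimension $r-\bar r+1\ge 1$, and $\bar{\mathbf{h}}_m$ does not vanish identically on $O$. Any $\bx\in O$ annihilates those $\bar r-1$ rows, so $|\bar N(\bx)|\ge\bar r-1$, i.e.\ $|\supp(\bar\bH\bx)|\le q-\bar r+1$, and the hypothesis fires to give $|\supp(\bH\bx)|\le|\supp(\bar\bH\bx)|$. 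For generic $\bx\in O$ the support $\supp(\bar\bH\bx)$ equals a fixed maximal set $S_m\ni m$ with $|S_m|\le q-\bar r+1$; letting $\bx$ range over $O$, and over all admissible choices of the $\bar r-1$ annihilated rows, yields a whole family of support inequalities tying $\bH$ to $\bar\bH$.

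From this family the plan is to extract the index-level correspondence: by intersecting the sets $\supp(\bH\bx)$ over generic $\bx$ in these subspaces and combining with the rank and dimension counts above, one confines the rows of $\bH$ that remain active to the coordinates where $\bar\bH$ is active, which pins down $\mathbf{h}_i\propto\bar{\mathbf{h}}_m$ for a unique $i$. Running this for every $m$ and invoking the assumption that $\bH$ has no identically zero row, I would verify (via a counting argument on multiplicities of mutually proportional rows) that the induced map is a genuine bijection, whence $\rank(\bH)=\bar r$ and $\bar\bH=\Perm\Scale\bH$. The hard part will be exactly this transfer, because the hypothesis controls only the \emph{cardinality} $|\supp(\bH\bx)|$ and never the support itself, so one may not simply assert $\supp(\bH\bx)\subseteq\supp(\bar\bH\bx)$. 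Bridging from cardinality domination to true coordinate-by-coordinate proportionality, while correctly accommodating repeated rows and upgrading an injection to a bijection, is the combinatorially delicate core of the argument and is in essence Kruskal's original reasoning.
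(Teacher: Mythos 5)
First, a point of reference: the paper itself offers no proof of Lemma~\ref{lem:kruskal_perm}. It is Kruskal's permutation lemma, quoted in row form from \cite{Kruskal77Decomp}, following \cite{Jiang04PermLemma,Stegeman05Perm}, with all proof details deferred to the technical report. So your attempt can only be measured against the classical proofs in those references, and against that standard it contains a genuine gap rather than a complete argument.

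Your reductions are correct as far as they go: the translation of the hypothesis into zero-set form (writing $\bar r=\rank(\bar\bH)$, the implication becomes ``$\bx$ annihilates at least $\bar r-1$ rows of $\bar\bH$ forces $\bH\bx$ to have at least as many zero entries as $\bar\bH\bx$''); the observation that $\bx$ orthogonal to the whole row space of $\bar\bH$ forces $\bH\bx=\mnull$, so the rows of $\bH$ lie in that row space; and the fact that for generic $\bx$ in the orthocomplement of a subspace $V$ spanned by $\bar r-1$ rows of $\bar\bH$, the implication fires and yields the counting inequality $\#\{i:\mathbf{h}_i\in V\}\ge\#\{i:\bar{\mathbf{h}}_i\in V\}$. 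The gap is everything after that. Your claim that ``intersecting the sets $\supp(\bH\bx)$ over generic $\bx$ \dots\ pins down $\mathbf{h}_i\propto\bar{\mathbf{h}}_m$ for a unique $i$'' is not an argument: those intersections only tell you \emph{how many} rows of $\bH$ lie in each span $V$, and a row of $\bH$ can lie in $V$ without being proportional to any row of $\bar\bH$ in $V$, so cardinality inequalities over $(\bar r-1)$-dimensional spans do not by themselves descend to proportionality on individual lines. That descent --- from counting inequalities over spans of subsets of rows down to an equinumerous, class-by-class correspondence of proportional rows, and then to a bijection --- is exactly Kruskal's delicate combinatorial induction (streamlined in \cite{Jiang04PermLemma,Stegeman05Perm}), and you explicitly omit it, writing that it ``is in essence Kruskal's original reasoning.'' Since that step \emph{is} the content of the lemma, what you have written is a plan whose decisive step is outsourced to the very result being proved. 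A smaller unproved assertion: your parenthetical that each row $\bar{\mathbf{h}}_m$ is ``necessarily nonzero'' does not come for free, since the stated hypothesis only excludes zero rows of $\bH$; nonvanishing of the rows of $\bar\bH$ must itself be extracted from the implication by a short separate argument.
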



\section{Single observer HMM Setting}
\label{sec:single}

Intuitively, the identifiability of an HMM rests on the number of states $q$ and the number of observation states $\kappa$, 
both having a direct relationship with $\bA$ and $\bB$ respectively. Our reformulation using the properties of tensors allows us to explore this relationship. 
Since the underlying Markov chain is assumed to be irreducible and aperiodic and assuming all alphabets in $\cY$ are not redundant, $\krank(\bA) \ge 1$ and $\krank(\bB) \ge 1$ respectively. 

\subsection{Main result}

Our main result is the following:
\begin{thm}
For an HMM parameterised by $\blambda$ to be unique up to a scaling and permutation of states, it is necessary and 
sufficient that $\krank(\bB\orow \bA) = q$.
\label{thm:unique_identifiability}
\end{thm}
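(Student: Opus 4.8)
The plan is to recast HMM equivalence (Definition~\ref{def:equivalence}) as the equality of a structured CP decomposition whose single factor matrix is $\bW = \bB \orow \bA$, and then to invoke Kruskal's permutation lemma (Lemma~\ref{lem:kruskal_perm}) to pin this factor down up to a row permutation and scaling. Concretely, I would first read off from \eqref{eq:sequence_prob} that every sequence probability is a multilinear contraction in the repeated factor $\bW$, with each observation index $y_t$ selecting the block $\bW\bE(y_t) = \bD_{y_t}(\bB)\bA$. Collecting these contractions over all sequences assembles the restricted CP tensor, whose mode matricisations are governed entirely by $\bW$; two equivalent HMMs $\blambda$ and $\tlambda$ then furnish two decompositions of the same tensor, with factors $\bW$ and $\tilde\bW = \tB \orow \tA$. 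The theorem reduces to showing this factor is unique up to permutation and scaling precisely when $\krank(\bW)=q$ (equivalently, since $\bW$ has exactly $q$ rows, when $\bW$ has full row rank).

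For sufficiency I would assume $\krank(\bW)=q$ and take any equivalent $\tlambda$. Since the rows of $\bW$ are the Kronecker products $\bb_s \otimes \ba_s$ of rows of stochastic matrices, none is identically zero, so the standing hypothesis of Lemma~\ref{lem:kruskal_perm} is met. The substantive task is to verify the support implication with $\bH=\bW$ and $\bar\bH=\tilde\bW$: whenever $|\supp(\tilde\bW\bx)| \le q-\rank(\tilde\bW)+1$ one must deduce $|\supp(\bW\bx)| \le |\supp(\tilde\bW\bx)|$. This support transfer is exactly what the tensor equality, together with the full-row-rank of $\bW$, should deliver. The lemma then gives $\tilde\bW = \Perm\Scale\,\bW$. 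Finally I would exploit the rank-one structure of each row: $\tilde\bW = \Perm\Scale\,\bW$ forces every row of $\tilde\bW$ to be a scalar multiple of a permuted $\bb_s \otimes \ba_s$, and unfolding this rank-one Kronecker form recovers $\tA$ and $\tB$ as permuted, rescaled copies of $\bA$ and $\bB$, with the row-stochasticity constraints collapsing the scaling to the admissible indeterminacy.

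For necessity I would argue the contrapositive. If $\krank(\bW)<q$ the rows of $\bW$ are linearly dependent, and I would exploit this degeneracy to construct a genuinely distinct but observationally equivalent HMM. Because the rows are dependent, there is a nonsingular change of basis on the hidden states that is not of the form $\Perm\Scale$ yet leaves every product $\bpi\,\bW\bE(y_1)\cdots\bW\bE(y_N)\ones{q}$ unchanged; the remaining work is to select the transform so that the induced $\tA$ and $\tB$ are valid stochastic matrices and are not related to $\bA,\bB$ by any $\Perm\Scale$, thereby exhibiting non-identifiability and confirming the condition is necessary.

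The main obstacle I anticipate is the bridge in the sufficiency step: translating HMM equivalence, which is a statement about the scalar contractions in \eqref{eq:sequence_prob}, into the per-vector support implication demanded by Lemma~\ref{lem:kruskal_perm}. Making this rigorous requires arguing that the repeated-factor (restricted CP) structure loses no information relative to an unconstrained CP decomposition, so that the permutation lemma applies directly to $\bW$ rather than to larger derived factor matrices, and then showing that full row rank of $\bW$ forces the support transfer. By comparison, the rank-one unfolding that yields $\tA,\tB$ and the explicit counterexample in the necessity direction should be comparatively routine once the degenerate transform is exhibited.
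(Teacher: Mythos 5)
Your sufficiency argument has a genuine gap: you apply Kruskal's permutation lemma to the pair $(\bH,\bar\bH)=(\bW,\tilde\bW)$, but for this pair the lemma is vacuous --- its hypothesis is essentially equivalent to its conclusion, so ``verifying the support transfer'' amounts to proving the theorem by other means. Concretely, suppose both matrices have rank $q$. Applying your hypothesized implication to any $\bx\in\ker(\tilde\bW)$ forces $\ker(\tilde\bW)\subseteq\ker(\bW)$, hence $\bW=\bM\tilde\bW$ with $\bM$ nonsingular; applying it to vectors $\bx_i$ with $\tilde\bW\bx_i=\bee_i$ forces $|\supp(\bM\bee_i)|\le 1$ for every $i$, i.e.\ every column of $\bM$ has at most one nonzero entry, so $\bM$ is monomial --- which is exactly the conclusion $\tilde\bW=\Perm\Scale\bW$. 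Moreover, the inputs you propose to use (HMM equivalence plus full row rank of $\bW$) cannot deliver this: at best they yield $\bW=\bM\tilde\bW$ for some nonsingular $\bM$, and any non-monomial choice of $\bM$ produces a $\tilde\bW$ with the same row space and full row rank. What rules such $\bM$ out is not the linear algebra of $\bW$ as a matrix but the Kronecker structure of its rows: each row of $\tB\orow\tA$ is a tensor product of two row vectors, while (reading $\krank(\bB\orow\bA)=q$ as the paper does) no non-trivial linear combination of rows of $\bB\orow\bA$ is expressible as such a tensor product. You invoke this rank-one structure only after the lemma, in the ``unfolding'' step, which is too late; flattening $\krank(\bB\orow\bA)=q$ to ``full row rank'' discards precisely the property that makes the lemma's hypothesis checkable.

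The paper avoids the circularity by keeping the hidden-state mode as a separate factor: it forms the three-way per letter tensor $\bcL$ with components $\lbrack\bA,\id{q},\bB\rbrack$, whose second mode matricisation is $\id{q}(\bB\orow\bA)$, and applies the permutation lemma to the pair $(\id{q},\tC)$, where $\tC$ is the corresponding factor of an alternative decomposition $\tC^\T(\tB\orow\tA)=\id{q}(\bB\orow\bA)$. For that pair the hypothesis is genuinely verifiable: if $|\supp(\tC\bx)|=1$, then $\bx^\T(\bB\orow\bA)=\bx^\T\tC^\T(\tB\orow\tA)$ is a scaled row of $\tB\orow\tA$, hence a tensor product of two row vectors, and the no-rank-one-combination property of $\bB\orow\bA$ forces $|\supp(\bx)|\le 1$; Lemma \ref{lem:kruskal_perm} then shows $\tC$ is a permutation-scaling of $\id{q}$, from which uniqueness of $\bA$ and $\bB$ follows (Lemma \ref{lem:single_letter}). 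Your necessity direction is in the same spirit as the paper's (construct an equivalent HMM from the degeneracy, as in the construction following Lemma \ref{lem:single_letter}), but the sufficiency direction as you have set it up cannot be completed without importing the restricted-CP idea you omitted.
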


Previous work \cite{Finesso90HMM,Gilbert59HMM} studied the class of \emph{regular} HMMs. An HMM is regular if there 
exists a set of $2q$ sequences whose joint probabilities can be described by a product of two linear subspaces of 
dimension $q$. A regular HMM is permutation and scaling indeterminate. Finesso \cite{Finesso90HMM} provided a simple 
sufficient condition for equivalence between two HMMs. Additionally, the author proved a necessary condition for a regular 
HMM to be equivalent to another HMM. The proof is probabilistic as he showed the set of parameters $\blambda$ of
HMMs almost surely leads to regularity in the Lebesgue measure. 

Our result differs from Finesso's in the sense that we show an interaction between the hidden and the observation states, 
dispensing with assumption of regularity of the HMM and replacing it with a deterministic condition. Furthermore, regular 
HMMs are also minimal, implying $\krank(\bA) = q$ (see \techrep). 
The result shows there is no longer any restriction to regular HMMs, or as coined by Finesso \cite{Finesso90HMM}, a 
\emph{Petrie point} after Petrie's work \cite{Petrie69HMM} on regular HMMs, since the deterministic condition covers all
possible cases. In this sense, the result is the strongest to date on the identifiability, whether deterministic or
generic, of HMMs. 

Theorem \ref{thm:unique_identifiability} is a consequence of the properties of a specific restricted CP tensor, where one 
mode of the tensor is full rank \cite{Jiang04PermLemma}, which we call the \emph{per letter tensor}. Let us consider 
$\bcL$, a three way tensor of dimensions $q \times q \times \kappa$, with component matrices 
$\lbrack \bA, \id{q}, \bB \rbrack$. For each element of 
$\bcL$,
\begin{align*}
\bcL_{i,j,k} &:= P_{\blambda}(X_{t+1} = i\,|\,X_{t} = j) \cdot P_{\blambda}(Y_t = k\,|\,X_t = j)\\
&= P_{\blambda}(Y_t = k,X_{t+1} = i\,|\,X_{t} = j).
\end{align*}
Then, each slice of the third mode $\bcL_k := \bcL_{\cdot,\cdot,k} = \id{q} \bD_k(\bB) \bA$, $k=1,2,\cdots,\kappa$ is the 
per observation letter and state probability of the set $\cY$ arranged in ascending order\footnote{This is one way of labelling the letters, as labelling is non-unique.}. A key observation is the 
equivalence of $\bcL$ and $\id{q} (\bB \orow \bA)$, its second mode matricisation. 

We now prove a necessary and sufficient condition on the uniqueness of the decomposition of $\bcL$. 
%
\begin{lem}
The per letter tensor $\bcL$ is unique up to a permutation and scaling of rows if and only if $\krank(\bB \orow \bA) = q$.
\label{lem:single_letter}
\end{lem}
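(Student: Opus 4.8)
The plan is to read off the uniqueness of $\bcL$ from Kruskal's permutation lemma (Lemma~\ref{lem:kruskal_perm}), exploiting the fact that the middle factor of $\bcL$ is the full-rank identity $\id{q}$. This makes $\bcL$ a \emph{restricted} CP model in which one mode is full rank, and in this regime the quantity that controls uniqueness is exactly the Kruskal rank of the second-mode matricisation $\bcL_{(2)} = \bB \orow \bA$. I would prove the two implications separately, handling sufficiency via the permutation lemma and necessity by an explicit construction.

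For sufficiency, assume $\krank(\bB\orow\bA) = q$ and let $\lbrack\tA,\tC,\tB\rbrack$ be any competing $q$-component decomposition of $\bcL$. First I would equate second-mode matricisations, $\bB\orow\bA = \tC^\T(\tB\orow\tA)$; since $\krank(\bB\orow\bA)=q$ forces $\bB\orow\bA$ to have full row rank $q$, this identity forces $\rank(\tC)=q$, i.e.\ $\tC$ is nonsingular. I would then apply Lemma~\ref{lem:kruskal_perm} with $\bH = \bB\orow\bA$ and $\bar\bH = \tB\orow\tA$, both of size $q \times \kappa q$; the hypothesis that $\bH$ has no zero rows holds because each row $\bb_i\otimes\ba_i$ is a tensor product of nonzero probability rows. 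Verifying the support implication then yields $\bar\bH = \Perm\Scale\bH$, so that each row of $\tB\orow\tA$ is a nonzero scalar multiple of a distinct row $\bb_i\otimes\ba_i$ of $\bB\orow\bA$. Because a nonzero simple tensor $\bb_i\otimes\ba_i$ determines its factors up to reciprocal scalars, this separates into a single common row permutation together with independent diagonal scalings of $\bA$ and $\bB$; the identity middle factor then pins down the residual scaling so that the three diagonal scalings multiply to $\id{q}$, giving permutation-and-scaling uniqueness.

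For necessity I would argue the contrapositive constructively. If $\krank(\bB\orow\bA) < q$ then, since $\bB\orow\bA$ has exactly $q$ rows, its row rank is also strictly less than $q$, so its left null space is nontrivial. Picking a nonzero $\bv$ with $\bv^\T(\bB\orow\bA)=\mathbf{0}$ (which necessarily has at least two nonzero entries, since no row of $\bB\orow\bA$ vanishes), I would set $\tC^\T = \id{q}+\bee_1\bv^\T$, rescaling $\bv$ if needed so that $\tC$ is nonsingular. Then $\tC^\T(\bB\orow\bA)=\bB\orow\bA$, so $\lbrack\bA,\tC,\bB\rbrack$ has the same second-mode matricisation as $\bcL$ and hence equals $\bcL$ as a tensor, yet its middle factor $\tC$ is not a scaled permutation of $\id{q}$ (row $1$ has at least two nonzeros). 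This exhibits two inequivalent decompositions, so $\bcL$ is not unique, completing the contrapositive.

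The main obstacle is the verification of the support hypothesis of the permutation lemma in the sufficiency step. The linear relation $\tB\orow\tA = \tC^{-\T}(\bB\orow\bA)$ alone is too weak — a nonsingular left multiplier need not preserve supports — so the argument must use the \emph{tensor-product structure} of the rows of both $\bB\orow\bA$ and $\tB\orow\tA$, for instance by reinterpreting the $i$-th entry of $(\bB\orow\bA)\bx$ as a bilinear form $\ba_i\bX\bb_i^\T$ in the reshaped vector $\bx$ and bounding its support by rank arguments that are invariant across the two decompositions. Keeping the transpose and row/column conventions of the matricisations consistent while carrying this out is the delicate part.
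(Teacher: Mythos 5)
Your necessity argument is correct, and in fact more direct than the paper's: the paper exhibits non-uniqueness via a counterexample in which $\bb_1\otimes\ba_1+\bb_2\otimes\ba_2$ is itself a simple tensor, whereas you build an explicit alternative middle factor $\tC^\T=\id{q}+\bee_1\bv^\T$ from a left null vector $\bv$ of $\bB\orow\bA$; your construction, including the observations that $\bv$ must have at least two nonzero entries and that $\bv$ can be rescaled to keep $\tC$ nonsingular, is sound.

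The sufficiency direction, however, has a genuine gap, and it sits exactly at the point you flag as ``the main obstacle.'' You apply Lemma~\ref{lem:kruskal_perm} to the pair $\bH=\bB\orow\bA$, $\bar\bH=\tB\orow\tA$, but with this choice the support hypothesis is not merely delicate to verify --- it is circular. Since $\bB\orow\bA=\tC^\T(\tB\orow\tA)$ and $\rank(\bar\bH)=q$, the hypothesis must be checked for all $\bx$ with $|\supp(\bar\bH\bx)|\le 1$; for such an $\bx$ with $\bar\bH\bx=c\,\bee_j$ you get $\bH\bx=c\,\tC^\T\bee_j$, a scalar multiple of the $j$-th row of $\tC$. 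So verifying $|\supp(\bH\bx)|\le 1$ amounts to already knowing that every row of $\tC$ has a single nonzero entry --- which is precisely the conclusion (that $\tC$ is a scaled permutation of $\id{q}$) that the whole argument is meant to deliver. The paper avoids this trap by applying the permutation lemma to the other pair, $\bH=\id{q}$ and $\bar\bH=\tC$: there, for $|\supp(\tC\bx)|=1$, the relation $\bx^\T(\bB\orow\bA)=\bx^\T\tC^\T(\tB\orow\tA)$ says that a combination of $|\supp(\bx)|$ rows of $\bB\orow\bA$ equals a single simple tensor $\tb_j\otimes\ta_j$, and the paper's central claim (no nontrivial combination of two or more rows of $\bB\orow\bA$ is a simple tensor, which it identifies with $\krank(\bB\orow\bA)=q$) immediately forces $|\supp(\bx)|\le 1$. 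That is the only place the tensor-product structure of the rows can be brought to bear, and it is available only in that orientation of the lemma. Note also that if you were to import that structural condition to unblock your route, the permutation lemma would become superfluous: each row of $\tB\orow\tA=\tC^{-\T}(\bB\orow\bA)$ is a simple tensor written as a combination of rows of $\bB\orow\bA$, hence a multiple of a single such row, which already makes $\tC^{-\T}$ (and so $\tC$) a scaled permutation. As written, your sufficiency proof is incomplete, and completing it along the path you chose is not possible without effectively proving the conclusion first.
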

\begin{IEEEproof}
Crucial to our argument is the central claim that it is necessary and sufficient that none of the non-trivial linear 
combinations of rows of $\bB \orow \bA$ is expressible by a tensor product of two row vectors, that is, $\krank(\bB \orow 
\bA) = q$. Necessity is proven by contradiction. We borrow a counterexample from \cite{Jiang04PermLemma}. If 
the first two rows can be expressed as a vector $\bb_1 \otimes\ba_1 + \bb_2 \otimes \ba_2 = \tb_1 \otimes \ta_1$,
an alternative decomposition of $\bcL$ is as follows:
\begin{align}
\nonumber
\id{q} (\bB \orow \bA) &= 
\begin{bmatrix}
1 & 0 & \mathbf{0}\\
-1 & 1 & \mathbf{0}\\
\mathbf{0} & \mathbf{0} & \id{q-2}
\end{bmatrix}^\T
\begin{bmatrix}
\tb_1 \otimes \ta_1\\
\bb_2 \otimes \ba_2\\
\vdots\\
\bb_q \otimes \ba_q
\end{bmatrix}\\
\label{eq:alternative_decomp}
&= \bC^\T (\tB \orow \tA).
\end{align}
As $\bC^\T$ is not a permutation or scaling, there is no unique decomposition for $\bcL$.

For sufficiency, we only need to verify $|\supp(\bx)| = |\supp(\id{q}\bx)| \le |\supp(\tC\bx)|$ for all $|\supp(\tC\bx)| = 1$ (since $\bx 
=\mnull$ is the 
only zero support vector) for some $\tC$, a component of an alternative decomposition of $\bcL$, to satisfy Lemma 
\ref{lem:kruskal_perm}. With an alternative 
decomposition $\id{q} (\bB \orow \bA) = \tC^\T (\tB \orow \tA)$, then $\forall \bx$,
\ben
\bx^\T (\bB \orow \bA) = \bx^\T\tC^\T  (\tB \orow \tA).
\een
Consider $\bx$ with $|\supp(\tC \bx)| = 1$. Then, $\bx^\T\tC^\T  (\tB \orow \tA)$ is just a scaled tensor product of one row of $\tA$ 
and the corresponding row of $\tB$, by the above equation. If $|\supp(\bx)| > 1$, then more than one row of $\bB \orow \bA$ is needed to 
represent a row of $\tB \orow \tA$. This means a row of $\tB \orow \tA$ is not just a scaling and permutation, so $|\supp(\bx)| \le 1$ must hold. Kruskal's permutation lemma (Lemma \ref{lem:kruskal_perm})
then implies $\id{q}$ and $\tC$ are equivalent up to a permutation and scaling of rows.
Putting these arguments together implies the components of $\bcL$, i.e.~$\bA$, $\bB$ and $\id{q}$ are all unique up to a 
permutation and scaling of rows, and the result follows.
\end{IEEEproof}

The above implies a sufficient condition. For necessity, suppose $\krank(\bB \orow \bA) < q$, but the HMM is identifiable. However, one can construct $\bpi = \tpi \tC^\T,\ \ones{q} = (\tC^\T)^{-1}\tones{q},(\bB \orow \bA) \bE(k)  =  (\tB \orow \tA) (\bE(k)\tC^\T), \forall k,$
with $\tC$ as above in the proof of Lemma \ref{lem:single_letter}, resulting in a contradiction. Thus, $\krank(\bB \orow \bA) = q$ 
is a necessary and sufficient condition for identifiability.



Evaluating the Kruskal rank of $\bB\orow \bA$ is computationally difficult as it requires checking over all possible 
combinations of rows of a matrix. The computational complexity worsens as $q$ and $\kappa$ become large. The 
conditions above may be weakened using the concept of coherence \cite{Donoho03Optimal}, found in compressed 
sensing and dictionary learning literature to derive polynomial time algorithms for verifying HMM identifiability. Further 
details are found in \techrep.

\begin{rem}
The results may be extended to non-stationary Markov chains. In this case, let $\bA(t)$ and $\bB(t)$ be the time-heterogeneous
transition and observation matrices respectively. Then, Theorem \ref{thm:unique_identifiability} may be modified, essentially
asserting that for each $t=1,2,\cdots,N$, the conditions $\krank(\bB(t) \orow \bA(t)) = q$ must hold for the non-stationary HMM to 
be permutation and scaling indeterminate. 
\label{rem:nonstationary_uniqueness}
\end{rem}

\section{Multi-observer HMM Setting}
\label{sec:multi}

The above results prove useful in the study of multi-observer HMMs, which have applications in machine learning 
\cite{Li00TrainingMulti}, and detecting attacks on Internet Service providers \cite{Nguyen12MultiHMM}. We shall 
study identifiability in this setting, but first, we need to lay the foundations for the multi-observer case.

In the multi-observer setting, there are $m \ge 2$ observers of an underlying Markov chain. The observers are assumed 
independent to each other, since dependence would weaken the information content of their observations. The underlying 
irreducible, aperiodic Markov chain $\{X_t\}_{t \ge 1}$ is being observed by all $m$ observers, with each $X_t \in \cX$, starting 
from initial state probabilities $\bpi$, drawn from a stationary distribution. Each observer $j$ may have a different
perspective of the chain, denoted by the processes $\{Y^{(j)}_t\}_{t \ge 1}$, with each $Y^{(j)} \in
\cY^{(j)}$ $\forall j=1,2,\cdots,m$. Without loss of generality, let $\cX = \{1,2,\cdots,q\}$ and for each $j$, $\cY^{(j)} = \{1,2,\cdots,
\kappa_j\}$.

While the transition matrix of the hidden states $\bA$ remains the same for all observers, their associated observation 
matrices may differ. If at least two observation matrices are distinct, i.e.~there exists  indices $\ell$ and $\ell'$ such that 
$\bB^{(\ell)} \ne \bB^{(\ell')}$, the set of independent observers are called \emph{heterogeneous} observers, otherwise they are  
called \emph{homogeneous} observers. We model the separate observation matrices $\bB^{(j)}$ for each observer $j = 
1,2,\cdots,m$, each of size $q \times \kappa_j$ in the heterogeneous case, and $\bB^{(j)} := \bB$ for all $j$ in the homogeneous 
case. In either setting, the HMM is described by the parameter set $\blambdam := \{\bpi; m, q, \{\kappa_j\}_{j=1}^m, \bA, \{\bB^{(j)}
\}_{j=1}^m\}$, with appropriate modifications to the observation matrix depending on the setting. For the homogeneous setting, 
we let $\kappa_j := \kappa$ and $\bB^{(j)} := \bB$, $\forall j$.

As we shall see, whether the observers are heterogenous or homogenous makes a significant difference to the 
identifiability of the HMM.

\subsection{Multi--letter tensor}

Unlike the single observer scenario, there are multiple observations in a single time step. We first consider the heterogeneous 
case, where, without loss of generality, we define the multi-letter tensor $\bcM_{\ast}$, an order $m+2$ tensor of dimensions $q 
\times q \times \kappa_1 \times \cdots \times \kappa_m$, with component matrices $\lbrack \bA, \id{q}, \{\bB^{(j)}\}_{j=1}^m \rbrack
$. 

Just as in the case of a single observer, the multi-letter tensor is connected to the HMM via its matricisation. Let $\kappa' := 
\prod_{j=1}^m \kappa_j$. Thus, the joint probability of a particular observed sequence $\by_1,\by_2,\cdots,\by_N$, noting each 
observation is now a vector, may be described by
\begin{align}
\nonumber
&P_{\blambda} (Y_1=\by_1, Y_2= \by_2,\cdots, Y_N = \by_N) \\
\label{eq:multi_sequence_prob}
&\hspace{1cm} = \bpi  \bW_{\ast} \bE(\by_1)  \bW_{\ast} \bE(\by_2) \cdots \bW_{\ast}  \bE(\by_N) \ones{q},
\end{align}
where $\bE(k)$ is a $\kappa' q \times q$ matrix, divided to $\kappa'$ row partitions, with the $q \times q$ identity matrix in 
the $k$-th row partition, $\ones{q}$ is the $q \times 1$ vector of ones, and 
\begin{align}
\nonumber
\bW_{\ast} &:= \bigorow{j=1}{m} \bB^{(j)} \orow \bA\\
\label{eq:hetero_mode}
&\,= \bB^{(1)} \orow \bB^{(2)} \orow \cdots \orow \bB^{(m)} \orow \bA.
\end{align}
In this formulation, all possible output sequences from space $\cY^{(1)} \times \cY^{(2)} \times \cdots \times \cY^{(m)}$ are 
ordered lexicographically, with $\bE(\by)$ selecting the correct position of $\by$ out of this ordering.
Note that $\bW_{\ast}$ is equivalent to $\bcM_{\ast}$ since it is the second mode matricisation of the tensor. 
Similarly, in the homogeneous setting, let $\bW_{\circ}$ have the same structure as \eqref{eq:hetero_mode}, with $\bB^
{(j)}:= \bB$, $\forall j$, and $\kappa' = \kappa^m$. 
%
%
We have the following result for $\bcM_{\ast}$.
\smallskip
\begin{lem}
Assume $m \ge 2$. The heterogeneous multi-letter tensor $\bcM_{\ast}$ is unique up to permutation and scaling of rows if and 
only if $\krank(\bigorow{j=1}{m} \bB^{(j)} \orow \bA) = q$. 
\label{lem:multi_observer_letter_hetero}
\end{lem}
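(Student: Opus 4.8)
The plan is to reduce this to the single-observer argument of Lemma~\ref{lem:single_letter} by treating the combined row-wise product $\bW_{\ast} = \bigorow{j=1}{m} \bB^{(j)} \orow \bA$ of \eqref{eq:hetero_mode} as the effective factor that played the role of $\bB \orow \bA$ there. The starting point is the observation, already recorded before the statement, that $\bW_{\ast}$ is the second-mode matricisation $\id{q}\bW_{\ast}$ of $\bcM_{\ast}$ and that the $\id{q}$ mode has full column rank. This is exactly the restricted CP situation of \cite{Jiang04PermLemma} where one factor is full rank, so Kruskal's permutation lemma (Lemma~\ref{lem:kruskal_perm}) will again be the main tool, now applied in the $m+2 \ge 4$ way setting. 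The rows of $\bW_{\ast}$ are the $(m+1)$-fold tensor products $\bb^{(1)}_i \otimes \cdots \otimes \bb^{(m)}_i \otimes \ba_i$, and I would first record the basic equivalence that $\krank(\bW_{\ast}) = q$ holds if and only if no nontrivial linear combination of these rows can be written as a single rank-one product $\tb^{(1)} \otimes \cdots \otimes \tb^{(m)} \otimes \ta$.

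For necessity I would argue by contradiction exactly as in \eqref{eq:alternative_decomp}. If $\krank(\bW_{\ast}) < q$, then some nontrivial combination of rows---say the first two---collapses to a single rank-one product, and substituting this into the decomposition produces an alternative factorisation $\id{q}\bW_{\ast} = \bC^\T(\bigorow{j=1}{m} \tB^{(j)} \orow \tA)$ with the same non-permutation, non-scaling matrix $\bC^\T$ used in the single-observer proof. Since $\bC^\T$ is neither a permutation nor a scaling, $\bcM_{\ast}$ is not unique, showing that $\krank(\bW_{\ast}) = q$ is necessary for uniqueness.

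For sufficiency I would assume $\krank(\bW_{\ast}) = q$ and take an arbitrary alternative decomposition $\id{q}\bW_{\ast} = \tC^\T(\bigorow{j=1}{m} \tB^{(j)} \orow \tA)$ with $\tC$ nonsingular. Applying Lemma~\ref{lem:kruskal_perm} with $\bH = \id{q}$ and $\bar\bH = \tC$, the value $q - \rank(\tC) + 1 = 1$ means I only need to check vectors $\bx$ with $|\supp(\tC\bx)| = 1$. For such $\bx$ the right-hand side $\bx^\T\tC^\T(\bigorow{j=1}{m} \tB^{(j)} \orow \tA)$ is a single scaled rank-one product, so $\bx^\T \bW_{\ast}$ must be one as well; by $\krank(\bW_{\ast}) = q$ this forces $|\supp(\bx)| \le 1 = |\supp(\id{q}\bx)|$. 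The permutation lemma then yields $\tC = \Perm\Scale$, and unravelling the matricisation shows that $\bA$, $\id{q}$ and each $\bB^{(j)}$ are determined up to a common permutation and scaling of rows.

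The main obstacle I expect is purely in the bookkeeping of the $(m+1)$-fold rank-one structure: I must be sure that a single row of the alternative factor $\bigorow{j=1}{m}\tB^{(j)} \orow \tA$ is genuinely rank-one in the same $(m+1)$-way sense used to define $\krank(\bW_{\ast})$, so that the implication ``combination is rank-one $\Rightarrow$ support $\le 1$'' remains valid when the factor count grows with $m$; and I must justify that $\tC$ may be taken nonsingular, from minimality of the representation, so that $\rank(\tC) = q$ and the permutation lemma's support bound collapses to $1$. Once these two points are verified, the remainder of the argument is verbatim the single-observer proof of Lemma~\ref{lem:single_letter}, with $\bW_{\ast}$ in place of $\bB \orow \bA$.
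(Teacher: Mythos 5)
Your proposal follows essentially the same route as the paper, whose own proof is only a sketch: it reduces to the argument of Lemma~\ref{lem:single_letter} with $\bigorow{j=1}{m}\bB^{(j)}\orow\bA$ in place of $\bB\orow\bA$, reusing the counterexample of \eqref{eq:alternative_decomp} for necessity and applying Kruskal's permutation lemma to the full-rank factor $\id{q}$ for sufficiency. Your write-up is in fact more explicit than the paper's (notably in flagging that $\rank(\tC)=q$ is needed for the support bound to collapse to $1$, a point the paper glosses over), but it is the same approach.
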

\begin{proof}
The proof is similar to the proof of Lemma \ref{lem:single_letter}, extended to the multidimensional case, thus, we only need 
to sketch the proof here. We claim that it is necessary and sufficient that none of the non-trivial linear combinations of 
rows of $\bigorow{j=1}{m} \bB^{(j)} \orow \bA$ is expressed by a tensor product of two row vectors, 
i.e.~$\krank(\bigorow{j=1}{m} \bB^{(j)} \orow \bA) = q$. The chief ingredients are, (1) show a counterexample to proof
necessity, where the same example from the proof of Lemma \ref{lem:single_letter} can be used, appropriately modified to
account for additional dimensions, to construct an alternative decomposition of $\bcM_{\ast}$, and (2) for sufficiency, show
that $\id{q}$, the full rank component of $\bcM_{\ast}$ satisfies Kruskal's permutation lemma. Then, the claim is established.
\end{proof}

We must, however, be careful when the observers are independent and homogeneous. 
In this scenario, the above result no longer holds. Instead, the homogeneous setting is equivalent to the single observer
setting, evidenced by the following result.

\begin{lem}
It is necessary and sufficient that $\krank(\bB \orow \bA) = q$ for the homogeneous multi-letter tensor $\bcM_{\circ}$ to be 
unique up to permutation and scaling of rows.
\label{lem:multi_observer_letter_homo}
\end{lem}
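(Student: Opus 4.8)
The plan is to follow the template of Lemma \ref{lem:single_letter} and the heterogeneous Lemma \ref{lem:multi_observer_letter_hetero}, exploiting that $\id{q}$ is again the full-rank mode of $\bcM_{\circ}$, so that Kruskal's permutation lemma (Lemma \ref{lem:kruskal_perm}) can be brought to bear on any alternative decomposition $[\tA,\id{q},\tB,\ldots,\tB]$. The structural feature that sets the homogeneous case apart is that the $m$ observation modes are forced to coincide, so every row of the mode-$2$ matricisation has the tensor-power form $\bb_i^{\otimes m}\otimes\ba_i$, and any admissible competing representation must again be homogeneous. The thrust of the argument is to show that, under this constraint, the matrix governing the hypothesis of the permutation lemma collapses from $\bigorow{j=1}{m}\bB\orow\bA$ back down to $\bB\orow\bA$, recovering precisely the single-observer condition of Theorem \ref{thm:unique_identifiability}.

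For sufficiency, suppose $\krank(\bB\orow\bA)=q$. I would first record the auxiliary fact that this already forces $\krank(\bigorow{j=1}{m}\bB\orow\bA)=q$. Given any relation $\sum_i c_i\,\bb_i^{\otimes m}\otimes\ba_i=\mnull$, contract the last $m-1$ observation factors against generic vectors $\bz_2,\dots,\bz_m$ to obtain $\sum_i c_i\bl\prod_{l=2}^{m}\bb_i^\T\bz_l\br\,\bb_i\otimes\ba_i=\mnull$; since $\krank(\bB\orow\bA)=q$ each coefficient must vanish, and because every row satisfies $\bb_i\ne\mnull$ (as $\krank(\bB)\ge 1$) one may choose the $\bz_l$ so that $\prod_l\bb_i^\T\bz_l\ne 0$, forcing $c_i=0$. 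With the full-rank mode $\id{q}$ and $\krank(\bigorow{j=1}{m}\bB\orow\bA)=q$ in hand, the sufficiency half of Lemma \ref{lem:multi_observer_letter_hetero} applies verbatim and yields uniqueness of $\bcM_{\circ}$ up to permutation and scaling of rows.

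For necessity I would argue the contrapositive: assuming $\krank(\bB\orow\bA)<q$, I would try to build a \emph{homogeneous} alternative that is not a permutation and scaling, mirroring the counterexample \eqref{eq:alternative_decomp}. A nontrivial dependence among the rows of $\bB\orow\bA$ supplies, through a non-permutation matrix $\tC$, an alternative decomposition of the single-letter tensor $\bcL$, and one would then attempt to reproduce the homogeneous sequence probabilities \eqref{eq:multi_sequence_prob} via the pooled parameters $\bpi=\tpi\tC^\T$, $\ones{q}=(\tC^\T)^{-1}\tones{q}$, with matched per-letter slices, exactly as in the single-observer necessity argument.

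The hard part, and the step I expect to be the main obstacle, is making this construction respect homogeneity. The single-observer merge replaces two rows by one rank-one row $\tb_1\otimes\ta_1$, but the corresponding homogeneous rows $\bb_1^{\otimes m}\otimes\ba_1$ and $\bb_2^{\otimes m}\otimes\ba_2$ do not in general combine into a tensor-power row $\tb_1^{\otimes m}\otimes\ta_1$, so the counterexample cannot simply be lifted. One must instead show directly that no valid homogeneous reparameterisation can exploit the extra identical observers, i.e.~that admissibility within the homogeneous class pins the effective distinguishing matrix to $\bB\orow\bA$ rather than $\bigorow{j=1}{m}\bB\orow\bA$. This is delicate precisely because, viewed as a plain order-$(m{+}2)$ CP tensor, $\bcM_{\circ}$ can already be unique under the weaker condition $\krank(\bigorow{j=1}{m}\bB\orow\bA)=q$; it is the insistence that any equivalent model again be homogeneous that must be invoked to sharpen the requirement down to $\krank(\bB\orow\bA)=q$, and carrying that reduction through rigorously is where the real work lies.
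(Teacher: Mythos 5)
First, a point of comparison: the paper never actually proves this lemma in the text --- it is stated bare, followed only by the remark that the repeated copies of $\bB$ ``do not provide sufficient variability'' together with a pointer to a Stegeman--Sidiropoulos-type example \cite{Stegeman05Perm}, with details deferred to \techrep{} --- so your proposal can only be measured against that indicated route. Your sufficiency half matches it and is sound: the contraction argument showing $\krank(\bB\orow\bA)=q$ forces $\krank(\bigorow{j=1}{m}\bB\orow\bA)=q$ is correct (and can be simplified, since $\bB$ is row-stochastic: contract the extra modes against $\ones{\kappa}$, so that $\bb_i^\T\ones{\kappa}=1$ and no genericity is needed), after which the full-rank mode $\id{q}$ and Kruskal's permutation lemma (Lemma \ref{lem:kruskal_perm}) are used exactly as in Lemmas \ref{lem:single_letter} and \ref{lem:multi_observer_letter_hetero}.

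The genuine gap is the necessity half, and your plan for it points in the wrong direction. Uniqueness of $\bcM_{\circ}$ up to permutation and scaling quantifies over \emph{all} rank-$q$ decompositions of the order-$(m+2)$ tensor; a competing decomposition is under no obligation to be homogeneous, so your premise that ``any admissible competing representation must again be homogeneous'' is unwarranted --- and it actively hurts you, because restricting the competitor class only makes uniqueness easier to hold, hence makes the ``only if'' direction \emph{harder} to establish, not easier. What necessity actually calls for is a construction: from a dependence among the rows $\bb_i\otimes\ba_i$, exhibit a combination of two or more rows $\bb_i^{\otimes m}\otimes\ba_i$ that is \emph{rank one} --- it need not be a tensor power $\tb^{\otimes m}\otimes\ta$; that requirement is precisely the misconception blocking you --- and then the substitution \eqref{eq:alternative_decomp} produces a genuine, generally non-homogeneous, alternative decomposition even when $\krank(\bigorow{j=1}{m}\bB\orow\bA)=q$. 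This also dissolves the ``tension'' you raise: with repeated components, $\krank(\bigorow{j=1}{m}\bB\orow\bA)=q$ does \emph{not} imply uniqueness, because the operative condition of \cite{Jiang04PermLemma} concerns rank-one combinations rather than merely vanishing ones, and the two part ways exactly in the symmetric case; Lemma \ref{lem:multi_observer_letter_hetero} cannot simply be cited with all $\bB^{(j)}$ equal. Concretely, for $m=2$ take all rows of $\bA$ equal to $\ba$ and $\bb_3=\alpha\bb_1+\beta\bb_2$ with $\alpha\beta\ne 0$: then $\krank(\bB\orow\bA)<q$ while $\krank(\bB\orow\bB\orow\bA)=q$, yet $\alpha^2\,\bb_1^{\otimes 2}-\tfrac{1}{2}\beta^2\,\bb_2^{\otimes 2}+\bb_3^{\otimes 2}=\left(\sqrt{2}\,\alpha\,\bb_1+\tfrac{1}{\sqrt{2}}\,\beta\,\bb_2\right)^{\otimes 2}$, so tensoring with $\ba$ gives a support-$3$ rank-one combination and uniqueness indeed fails. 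A final warning, however: this reduction cannot be carried out for all $m$, and the lemma's necessity claim itself appears to be false for $m\ge 3$. In the same configuration with $m=3$, any rank-one combination $\sum_i f_i\,\bb_i^{\otimes 3}$ must be a symmetric power $\lambda\bu^{\otimes 3}$, and matching the mixed-monomial coordinates forces all but one $f_i$ to vanish (geometrically, a plane through three points of the twisted cubic meets it in no fourth point, whereas for $m=2$ the three points span all symmetric tensors on the plane); the permutation-lemma argument then shows $\bcM_{\circ}$ \emph{is} unique even though $\krank(\bB\orow\bA)<q$. So your instinct that this step is ``where the real work lies'' is right, but the obstruction is not homogeneity of competitors: it is that the claimed equivalence does not survive $m\ge 3$, and neither your route nor the paper's sketched one can close that direction in general.
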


An intuitive explanation is that each additional component, $\bB$, is exactly the same and do not provide sufficient variability 
for unique decomposition. Thus, even if the tensor $\bcM_{\circ}$ is matricised in different ways, one can always find an 
alternative decomposition of $\bcM_{\circ}$, similar to an example by Stegeman et al.~\cite{Stegeman05Perm} for the 3-way 
tensor. It is for this reason, decomposition-wise, $\bcM_{\circ}$ is no different from 
the single letter tensor $\bcL$.

\subsection{Identifiability conditions}

\begin{thm}
Suppose the $m$ observers are independent and homogeneous. For an HMM parameterised by $\blambdam$ to be unique up 
to a scaling and permutation of states, it is necessary and sufficient that the per letter tensor of the HMM satisfies Lemma 
\ref{lem:multi_observer_letter_homo}, i.e.~$\krank(\bB \orow \bA) = q$.
\label{thm:identifiability_multi_homo}
\end{thm}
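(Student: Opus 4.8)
The plan is to lift the uniqueness statement for the homogeneous multi-letter tensor $\bcM_{\circ}$ (Lemma~\ref{lem:multi_observer_letter_homo}) to an identifiability statement for the full HMM, exactly as Theorem~\ref{thm:unique_identifiability} is obtained from Lemma~\ref{lem:single_letter} in the single-observer case. The bridge is the sequence-probability factorisation \eqref{eq:multi_sequence_prob}: every $P_{\blambdam}(\by_1,\dots,\by_N)$ is a product of copies of $\bW_{\circ} = \bigorow{j=1}{m}\bB\orow\bA$ interleaved with the letter selectors $\bE(\by_t)$, and $\bW_{\circ}$ is precisely the second-mode matricisation $\id{q}\,\bW_{\circ}$ of $\bcM_{\circ}$. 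Consequently, identifying the HMM up to permutation and scaling of states is equivalent to recovering the components $\bA$ and $\bB$ of $\bcM_{\circ}$ up to permutation and scaling of rows, and the theorem will follow by combining the two directions below with Lemma~\ref{lem:multi_observer_letter_homo}.

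For sufficiency, suppose $\krank(\bB\orow\bA) = q$. Then Lemma~\ref{lem:multi_observer_letter_homo} guarantees that $\bcM_{\circ}$ is permutation and scaling indeterminate, so from the per-step (single time-step) data the factors $\bA$, $\bB$ (and the trivial mode $\id{q}$) are determined up to a single permutation $\Perm$ and compatible diagonal scalings. Substituting these recovered factors into \eqref{eq:multi_sequence_prob} and matching the products across all sequences fixes the remaining freedom: the boundary vectors $\bpi$ and $\ones{q}$ absorb the residual scaling, so any equivalent parameterisation $\tlambdam$ differs from $\blambdam$ only by the permutation and scaling of the hidden states, which is exactly the desired uniqueness.

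For necessity, suppose instead $\krank(\bB\orow\bA) < q$. By Lemma~\ref{lem:multi_observer_letter_homo}, $\bcM_{\circ}$ then admits an alternative decomposition whose change-of-basis matrix $\tC$ on the full-rank mode is not a permutation and scaling. Mirroring the single-observer construction following Lemma~\ref{lem:single_letter}, I would set $\bpi = \tpi\tC^{\T}$, $\ones{q} = (\tC^{\T})^{-1}\tones{q}$, and $\bW_{\circ}\bE(\by) = \mathbf{\tilde W}_{\circ}(\bE(\by)\tC^{\T})$ for every composite letter $\by$, where $\mathbf{\tilde W}_{\circ} = \bigorow{j=1}{m}\tB\orow\tA$ is built from the alternative factors $\tA,\tB$. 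Telescoping the cancelling pairs $\tC^{\T}(\tC^{\T})^{-1}$ through the product \eqref{eq:multi_sequence_prob} shows $P_{\blambdam} = P_{\tlambdam}$ for all sequences and all $N$, so the two HMMs are equivalent while $\tC$ is not a permutation and scaling, contradicting identifiability.

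The main obstacle is the necessity telescoping: one must check that absorbing $\tC^{\T}$ between consecutive blocks is consistent for arbitrary sequence length $N$ and across the full lexicographic alphabet of $\kappa^{m}$ composite letters, and that the constructed $\tlambdam$ is a genuine HMM parameterisation rather than a purely formal factorisation. The homogeneity is what makes this delicate and also what drives the conclusion: since every observer contributes the same $\bB$, the additional modes of $\bcM_{\circ}$ supply no variability beyond $\bB\orow\bA$, so the obstruction $\tC$ coincides with the single-observer one and no number of homogeneous observers can remove it.
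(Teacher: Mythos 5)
Your proposal follows the same route as the paper: it combines Lemma~\ref{lem:multi_observer_letter_homo} with the tensor-to-HMM bridge via \eqref{eq:multi_sequence_prob}, and handles necessity by the same equivalent-HMM construction ($\bpi = \tpi\tC^{\T}$, $\ones{q} = (\tC^{\T})^{-1}\tones{q}$, absorbing $\tC^{\T}$ into the letter selectors) that the paper uses after Lemma~\ref{lem:single_letter}. The paper's own proof is only a terse sketch of exactly this argument (including the observation that homogeneity forces $\krank(\bigorow{j=1}{m}\bB\orow\bA)=q$ to coincide with $\krank(\bB\orow\bA)=q$), so your write-up is essentially a filled-in version of it rather than a different approach.
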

\begin{proof}
The proof follows from the properties of $\bcM_{\circ}$. Then, it is clear $\krank(\bB \orow \bA) = q$ if and only if $\krank(\bigorow
{{}}{m} \bB \orow \bA) = q$, from Lemma \ref{lem:multi_observer_letter_homo}, otherwise an equivalent HMM can be 
constructed, such that the original HMM is no longer permutation and scaling indeterminate.
\end{proof}

The result shows that the homogeneous setting is essentially equivalent to the single observer setting.
As mentioned in \cite{Nguyen12MultiHMM}, if an HMM is unidentifiable in the single observer case, it is also
unidentifiable in the multiple independent homogenous observer case, as there is not enough variability in $\bcM_{\circ}$. 
Thus, no matter how many independent homogeneous observers are present, if the model cannot 
be identified in the single observer setting, then the model remains unidentifiable.


We next turn our attention to the heterogeneous case, a consequence of Lemma \ref{lem:multi_observer_letter_hetero}.
\begin{thm}
Suppose the $m$ observers are independent and heterogenous. For an HMM parameterised by $\blambdam$ to be 
unique up to a scaling and permutation of states, it is necessary and sufficient that the multi-letter tensor of the HMM satisfies 
Lemma \ref{lem:multi_observer_letter_hetero}, i.e.~$\krank(\bigorow{j=1}{m} \bB^{(j)} \orow \bA) = q$.
\label{thm:identifiability_multi_hetero}
\end{thm}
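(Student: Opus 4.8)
The plan is to run the argument of Theorem~\ref{thm:unique_identifiability} one dimension higher, replacing the per-letter tensor $\bcL$ by the heterogeneous multi-letter tensor $\bcM_{\ast}$, the matricisation $\bB \orow \bA$ by $\bW_{\ast} = \bigorow{j=1}{m} \bB^{(j)} \orow \bA$, and each scalar letter by an observation vector $\by$ that indexes the lexicographic ordering of $\cY^{(1)} \times \cdots \times \cY^{(m)}$. The bridge to the HMM is the sequence-probability identity \eqref{eq:multi_sequence_prob}, every factor of which is assembled from $\bW_{\ast}$, the second-mode matricisation of $\bcM_{\ast}$. In this way the identifiability of the HMM is made equivalent to the uniqueness of the CP decomposition of $\bcM_{\ast}$, which is exactly what Lemma~\ref{lem:multi_observer_letter_hetero} characterises.

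For sufficiency I would simply invoke Lemma~\ref{lem:multi_observer_letter_hetero}: when $\krank(\bW_{\ast}) = q$, the tensor $\bcM_{\ast}$ decomposes uniquely up to a permutation and scaling of rows, so its components $\bA$, $\id{q}$ and $\{\bB^{(j)}\}_{j=1}^m$ are pinned down up to a common row permutation $\Perm$ and compensating diagonal scalings $\Scale$. Since these components are precisely the objects entering \eqref{eq:multi_sequence_prob}, the uniqueness transfers to the HMM parameters, yielding identifiability up to a permutation and scaling of the hidden states.

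For necessity I would argue by contradiction, mirroring the necessity argument of Theorem~\ref{thm:unique_identifiability}. Suppose $\krank(\bW_{\ast}) < q$ yet the HMM is identifiable. Deficient Kruskal rank supplies, via the same counterexample used in the proof of Lemma~\ref{lem:multi_observer_letter_hetero}, a nonsingular $\tC$ that is neither a permutation nor a scaling together with an alternative factorisation of $\bW_{\ast}$. I would use $\tC$ to build a competing parameterisation $\tlambdam$ through $\bpi = \tpi \tC^\T$, $\ones{q} = (\tC^\T)^{-1}\tones{q}$ and $\bW_{\ast}\bE(\by) = \tilde{\bW}_{\ast}(\bE(\by)\tC^\T)$ for every $\by$. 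Substituting these into \eqref{eq:multi_sequence_prob} for an arbitrary sequence and letting the inserted $\tC^\T$ and $(\tC^\T)^{-1}$ factors cancel between consecutive time steps would give $P_{\blambda} = P_{\tlambda}$ for all sequences and all $N$, so $\blambdam$ and $\tlambdam$ are equivalent in the sense of Definition~\ref{def:equivalence}; since $\tC$ is not a permutation-scaling, this contradicts identifiability.

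The main obstacle --- the only point at which the heterogeneous case genuinely differs from the single-observer one --- is confirming that a single-time-step tensor equivalence lifts to equivalence across all sequence lengths $N$. Concretely, I must verify that the boundary assignments $\bpi = \tpi\tC^\T$ and $\ones{q} = (\tC^\T)^{-1}\tones{q}$, together with the per-step relation above, make the change-of-basis factors telescope through the length-$N$ product in \eqref{eq:multi_sequence_prob}; this is where the precise placement of $\tC^\T$ matters and where a naive substitution can leave uncancelled factors, so I would set up the per-step relation as the similarity that collapses each junction. The remaining work is bookkeeping: checking that the lexicographic selector $\bE(\by)$ picks the correct block of $\bW_{\ast}$ uniformly in the number of observers $m$, and that the chained row-wise products in \eqref{eq:hetero_mode} interact with $\tC$ exactly as the single $\bB \orow \bA$ did, so that the counterexample construction carries over without change.
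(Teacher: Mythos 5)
Your proposal is correct and follows essentially the paper's own route: the paper obtains Theorem~\ref{thm:identifiability_multi_hetero} precisely as a consequence of Lemma~\ref{lem:multi_observer_letter_hetero} (sufficiency from the uniqueness of the decomposition of $\bcM_{\ast}$), with necessity handled by the same $\tC$-based construction of an equivalent parameterisation telescoping through \eqref{eq:multi_sequence_prob} that the paper spells out after Lemma~\ref{lem:single_letter} in the single-observer case. Your write-up merely makes explicit the bookkeeping (boundary vectors, placement of $\tC^\T$, the selector $\bE(\by)$) that the paper defers to its technical report.
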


The extra dimensions of $\bcM_{\ast}$ are related to the additional advantage of having multiple independent observers. Each single 
observer $j = 1,2, \cdots, m$ is essentially restricted to a per letter tensor $\bcL^{(j)}$ which may not satisfy Lemma 
\ref{lem:single_letter} individually, but satisfies Lemma \ref{lem:multi_observer_letter_hetero} when their observations are 
jointly considered. 
This proves a natural advantage multiple 
independent heterogeneous observers have over a single observer, used to great effect in \cite{Nguyen12MultiHMM}.

\section{Conclusion}
\label{sec:conclusion}

In this paper, we revisit the identifiability of hidden Markov models, via tensor decomposition, 
where there are well-known results regarding the permutation and scaling indeterminacy of tensors. We proved 
deterministic identifiability conditions  of single and multi-observer HMMs using well--established results on the Kruskal 
rank. Our results are stronger than previous results, where only generic identifiability based on the regularity of the HMM is 
assumed. Future work includes using our framework to provide insights in the inference of the transition and 
observation matrices of HMMs and extend the work to dependent observers.

\bibliography{hmm_bib}
\bibliographystyle{abbrv}

\end{document}